\documentclass[letterpaper, 10 pt, conference]{ieeeconf}  %

\IEEEoverridecommandlockouts                              %

\overrideIEEEmargins                                      %

\usepackage{times}

\usepackage{graphics} %
\usepackage{times} %
\usepackage{amsmath, amssymb} %
\usepackage{booktabs} %
\usepackage[group-digits=false]{siunitx} %
\usepackage{graphicx} %
\usepackage{hyperref} 
\usepackage[capitalise]{cleveref}
\usepackage{algorithm}
\usepackage{algpseudocode}
\usepackage{mathtools}
\usepackage{tikz}
\usepackage{color, soul}
\usetikzlibrary{patterns,shapes,scopes,arrows,positioning,calc,decorations.pathmorphing,math,angles,quotes,decorations.markings}
\usepackage{diagbox}

\DeclareMathOperator*{\argmin}{\arg\!\min}

\newtheorem{lemma}{Lemma}

\newrobustcmd*{\bftabnum}{%
  \bfseries
  \sisetup{output-decimal-marker={\textmd{.}}}%
}
\sisetup{detect-weight=true,detect-inline-weight=math}

\pdfinfo{
   /Author ()
   /Title  ()
   /CreationDate ()
   /Subject ()
   /Keywords ()
}

\begin{document}

\title{\LARGE \bf Does Bilevel Optimization Result in More Competitive Racing Behavior?
}

\author{Andrew Cinar$^{1}$ and Forrest Laine$^{2}$%
\thanks{*This work was not supported by any organization}%
\thanks{$^{1}$Andrew Cinar is with the Department of Mechanical Engineering, Vanderbilt University,
        Nashville, TN 37235, USA
        {\tt\small al.cinar@vanderbilt.edu}}%
\thanks{$^{2}$Forrest Laine is with the Department of Computer Science, Vanderbilt University,
        Nashville, TN 37235, USA
        {\tt\small forrest.laine@vanderbilt.edu}}%
}

\mbox{\begin{minipage}[c]{\textwidth}
 \begin{center}
© 2025 IEEE.  Personal use of this material is permitted.  Permission from IEEE must be obtained for all other uses, in any current or future media, including reprinting/republishing this material for advertising or promotional purposes, creating new collective works, for resale or redistribution to servers or lists, or reuse of any copyrighted component of this work in other works. \\
\hfill \break
Accepted version.
 \end{center}
\end{minipage}
}
\thispagestyle{empty}
\newpage

\maketitle
\begin{abstract}
Two-vehicle racing is natural example of a competitive dynamic game. As with most dynamic games, there are many ways in which the underlying solution concept can be structured, resulting in different equilibrium concepts. The assumed solution concept influences the behaviors of two interacting players in racing. For example, blocking behavior emerges naturally in leader-follower play, but to achieve this in Nash play the costs would have to be chosen specifically to trigger this behavior. In this work, we develop a novel model for competitive two-player vehicle racing, represented as an equilibrium problem, complete with simplified aerodynamic drag and drafting effects, as well as position-dependent collision-avoidance responsibility. We use our model to explore how different solution concepts affect competitiveness. We develop a solution for bilevel optimization problems, enabling a large-scale empirical study comparing bilevel strategies (either as leader or follower), Nash equilibrium strategy and a single-player constant velocity baseline. We find the choice of strategies significantly affects competitive performance and safety. %

\end{abstract}

\IEEEpeerreviewmaketitle

\section{Introduction}
\label{sec:intro}
Game-theoretic motion planning has recently emerged as a promising approach for handling  multi-agent interactions in a principled manner, with many works exploring related ideas. However, some questions remain about the best way to formulate the trajectory games played between agents. The simplest and most common approach is to cast the interaction as a Generalized Nash Equilibrium Problem \cite{le2022algames,burger_interaction-aware_2022,ji_lane-merging_2021, williams_best_2018}, but this formulation has known deficiencies. In Nash equilibrium, players cannot anticipate how others will respond to changes in their actions. 
Some researchers suggest posing trajectory interactions as repeated games and solving for Generalized Feedback Nash Equilibria \cite{laine2023computation, fridovich2020efficient, laine2021multi, wang_game-theoretic_2021} to capture the intelligent reasoning lacking in static Nash equilibria. However, feedback equilibria are extremely difficult to solve, and the strategic benefit of using a more elaborate solution concept is unclear. 

Bilevel (Stackelberg) games lie between repeated and static games on the solution concept spectrum, with one player as the leader, and the other(s) as followers. The leader in bilevel optimization can predict how followers' actions will change based on its own, but the converse is untrue. Some studies have explored bilevel optimization in two-player games, leveraging the leader-follower dynamic to produce competitive behaviors like blocking \cite{liniger_noncooperative_2020, hu_multi-leader-follower_2015, yoo_stackelberg_2020}. However, computing true Stackelberg equilibria in the full, constrained trajectory space is also challenging due to the inherent complexity. %

Even if such problems could be solved, which player should be deemed the leader and which the follower is not clear. These distinctions may seem baseless in real-world situations where the players do not coordinate their roles and make decisions simultaneously. However, these distinctions are not irrelevant for autonomous decision-making in multi-agent interactions. Different formulations lead to different strategic behaviors. %

We believe advanced solution concepts offer benefits, but there exists a literature gap exploring to what degree a benefit is admitted. This study aims to close that gap by performing an empirical analysis of benefits of bilevel reasoning compared to static reasoning and single-player optimization. We focus on relatively basic solution concepts, because even for these equilibria, the costs and benefits of choosing among them are not well understood. Furthermore, we believe Stackelberg equilibria are the simplest formulation computationally that can reason about action and reaction. We believe our results can serve as a foundation for future comparisons of solution concepts.

\begin{figure}[t]
\centering
\includegraphics[width=0.65\columnwidth]{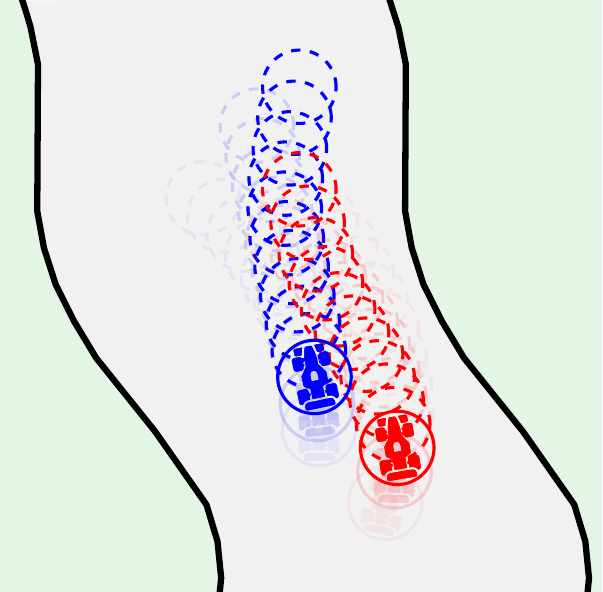}
\caption{Red (Player 2) Follower is blocked by Blue (Player 1) Leader while attempting to pass in bilevel L-F competition.}
\label{fig:bilevel_example}
\end{figure}

We focus on two-agent racing, but the results should be informative for all roboticists developing autonomous systems in multi-agent settings. Racing is a challenging, interactive domain, and so it is a good setting to study various interaction models. Our results are intended to guide appropriate information structure selection in the general class of finite dynamic games. Racing interactions can apply to autonomous driving or general human-robot interactions, where the agents must coordinate to maintain safety (like avoiding collisions) while achieving satisfactory individual performance, like minimizing control effort.

The contributions we make in this work are the following: 
\begin{enumerate}
    \item We implement a solver for nonlinear, constrained bilevel optimization.
    \item We derive a novel two-player racing model to capture key competitive aspects of racing, incorporating position-dependent collision avoidance responsibility, aerodynamic drag, drafting effects, nonlinear dynamics and curving tracks.\footnote{The code is available at \href{https://github.com/VAMPIR-Lab/EpecRacing}{https://github.com/VAMPIR-Lab/EpecRacing}.}
    \item We present a large scale empirical study analyzing the performance and robustness of 16 types of two-player racing competitions with different solution concept combinations of: (1) Single-player optimization, (2) generalized Nash equilibrium, (3) bilevel equilibrium as leader, or (4) bilevel equilibrium as follower.
\end{enumerate}

\section{Game Theory Background}
\label{sec:game_theory}
This work is primarily concerned with two formulations of a two-player mathematical game. Consider two decision-makers, player $1$ and player $2$. Each player $i$ has a set of decision variables $x_i \in \mathbb{R}^{n_i}$, termed the private decision variables. The joint set of decision variables is $x \in \mathbb{R}^n := [x_1^\intercal, x_2^\intercal]^\intercal$, where $n = n_1 + n_2$. We will use the shorthand $(x_i,x_{-i}) \equiv x$ referring both players, and use index $-i$ to refer to player $2$ when $i=1$ and vice-versa. 
Each player's preference is characterized by a continuous cost function $f_i : \mathbb{R}^{n} \to \mathbb{R}$, and a feasible region defined by $C_i := \{x\in\mathbb{R}^n : g_i(x) \ge 0\}$, where $g_i : \mathbb{R}^{n} \to \mathbb{R}^{m_i}$ is a vector-valued constraint function. We require that the functions $f_i$ and $g_i$ are twice-differentiable. 

For player $i$, the standard solution graph for their decision problem is defined as,
\begin{equation} \label{eq:solution_graph}
    S_i := \left\{ \begin{aligned} x^* \in \mathbb{R}^n \ : \ x^*_i \in &\argmin_{x_i}  && f_i\left(x_i, x^*_{-i}\right) \\
        & \ \ \ \ \text{s.t.} && \left(x_i, x^*_{-i}\right) \in C_i \end{aligned} \right\}.
\end{equation}
In this work, the minimization in \cref{eq:solution_graph} and other problems are assumed to be local unless specified otherwise. Solution graph $S_i$ is the set of all points $x_i^*$ which are local optimizers for the problem parameterized by $x_{-i}^*$.
A point $x^* \in \mathbb{R}^{n}$ is called a local Generalized Nash Equilibrium Point if 
\begin{equation} \label{eq:gnep}
\begin{aligned}
    x^* \in S_1 \cap S_2.
\end{aligned}
\end{equation}
A Generalized Nash Equilibrium Problem computes a point $x^*$ such that it satisfies \cref{eq:gnep} \cite{nash1951non,debreu1952social,arrow1954existence,facchinei2007generalized}. For brevity, we often drop the label ``generalized''.

For player $i$, we define the bilevel solution graph: 
\begin{equation}
    \label{eq:solution_graph_bilevel}
    B_i := \left\{ \begin{aligned} 
        x^* \in \mathbb{R}^n \ : \ x^* \in &\argmin_{x}  && f_i\left(x\right) \\
        & \ \ \ \ \text{s.t.} && x \in C_i,\, x \in S_{-i}
        \end{aligned} \right\}
\end{equation}
The bilevel solution graph for player $i$, the \emph{leader}, is the set of all local optimizer points for the bilevel problem, where player $-i$ is the \emph{follower}. A bilevel equilibrium problem is the problem of finding the points $x^* \in B_i$, termed the bilevel equilibrium.

In duopoly theory from economics, Generalized Nash Equilibrium Problems are analogous to finding equilibrium points in Cournot competition \cite{cournot1927researches}, and bilevel optimization problems are analogous to equilibrium points for a Stackelberg competition \cite{von2010market}. It has been shown that Stackelberg duopolies generally result in an increase in total welfare compared to Cournot duopolies \cite{huck2001stackelberg,daughety1990beneficial}. We investigate whether a similar results hold for Bilevel or Nash competitions in the racing domain.

\section{Computing Solutions to Equilibrium Problems}
\label{sec:bilevel_solver}
In this section we present methodologies for computing Nash and bilevel equilibrium points. 

\subsection{Nash Equilibrium Points}\label{sec:nash}
We assume a suitable constraint qualification is satisfied for the problems \cref{eq:solution_graph}. We invoke the Karush-Kuhn-Tucker (KKT) theorem \cite{karush1939minima,kuhntucker} to write the first-order necessary conditions which must hold for every point $x^* \in S_i$:
\begin{equation} \label{eq:kkt}
    \begin{aligned}
        x^* \in S_i &\implies  \exists \lambda_i \in \mathbb{R}^{m_i} : (x^*, \lambda_i) \in \overline{S_i},
    \end{aligned}
\end{equation}
\begin{equation}\label{eq:stationary_graph}
        \overline{S_i} := \left\{     \begin{aligned} (x^*, \lambda_i) : &\ \nabla_{x_i}f_i(x^*) - (\nabla_{x_i}g_i(x^*))^\intercal \lambda_i = 0 \\
        &\ 0 \le g_i(x^*) \perp \lambda_i \ge 0. \end{aligned}\right\}.
\end{equation}
Here, $0 \le a \perp b \ge 0$ for $a, b\in\mathbb{R}^m$ means $a_i \ge 0$, $b_i \ge 0$, and $a_ib_i = 0 \ \forall i \in \{1,...,m\}$.
Combining \cref{eq:kkt,eq:stationary_graph,eq:gnep}, we have the following:
\begin{equation} \label{eq:mcp}
    \begin{aligned}
        x^* \in S_1 \cap S_2 \implies &\exists \lambda \in \mathbb{R}^{m_1+m_2}:\,
        F(z) \perp l \le z \le u,
    \end{aligned}
\end{equation}
\begin{equation}
    \begin{aligned}
        z &:= \begin{bmatrix} x^* \\ \lambda \end{bmatrix}, & F(z) &:= \begin{bmatrix} 
            \nabla_{x_1}f_1(x^*) - (\nabla_{x_1}g_1(x^*))^\intercal \lambda_1 \\
            \nabla_{x_2}f_2(x^*) - (\nabla_{x_2}g_2(x^*))^\intercal \lambda_2 \\
            g_1(x^*) \\
            g_2(x^*)
        \end{bmatrix}, \\
        l  &:= \begin{bmatrix} 0 \\ 0 \end{bmatrix},  & u &:= \begin{bmatrix} 0 \\ \infty \end{bmatrix}.
    \end{aligned}
\end{equation}
The first and second terms in $l$ and $u$ are vectors of dimension $n$ and $m = m_1+m_2$, respectively, and \cref{eq:mcp} means: 
\begin{equation} \label{eq:comp_def} \forall i \in \{1, \cdots, n + m\},
    \begin{cases}
        F_i(z) = 0, l_i < z_i < u_i, &\ \mathrm{or} \\
        F_i(z) > 0, l_i = z_i, &\ \mathrm{or} \\
        F_i(z) < 0, z_i = u_i.
    \end{cases}
\end{equation}

It is seen then that the conditions \cref{eq:mcp} form a mixed complementarity problem \cite{facchinei2003finite}. In this work we use the PATH Solver \cite{dirkse1995path} to find solutions $z^*$ to \cref{eq:mcp}. Any such point is not necessarily a Nash equilibrium point \cref{eq:gnep}, since in general the necessary conditions \cref{eq:kkt} are not sufficient for optimality. Nevertheless, points satisfying \cref{eq:mcp} are often used as a proxy for Nash equilibrium points \cite{facchinei2007generalized,facchinei_generalized_2009,dreves_nonsmooth_2011}, and we will do the same for the purposes of this work.

It is important to note that a Nash equilibrium, or even a solution to the mixed complementarity problem (\ref{eq:mcp}), may not exist. Existence of solution is generally hard to prove in games with constrained, non-convex optimization problems, as is the case here. However, we can reliably find equilibrium points  using the procedure described in this section for the games we consider in this work.

\subsection{Bilevel Equilibrium Points}\label{sec:bilevel}

In contrast to the necessary-condition approach to Nash equilibrium problems, we will define sufficient conditions for bilevel equilibrium points. First, note that $S_i \subset \{ x: \exists \lambda, (x,\lambda) \in \overline{S_i}\}$. The sets $B_i$ can be approximated by replacing the constraint involving $S_{-i}$ with the looser constraint:
\begin{equation}
    \label{eq:stationary_graph_bilevel}
    \overline{B_i} := \left\{ \begin{aligned} 
        x^* : \ \exists \lambda^*, (x^*,\lambda^*) \in &\argmin_{x,\lambda}  && f_i\left(x\right) \\
        & \ \ \ \ \text{s.t.} && x \in C_i \\
        & && \left(x,\lambda\right) \in \overline{S_{-i}}
        \end{aligned} \right\}.
\end{equation}
However, it can be seen that if $x^* \in S_{-i} \cap \overline{B_i}$, then it must also be that $x^* \in B_i$. Therefore, we pursue characterizing the points within $\overline{B_i}$. 
Note that the sets $\overline{S_i}$ can be expressed as a union of simpler sets which involve only inequality constraints similar to the sets $C_i$. Specifically, 
\begin{equation}
    \overline{S_{i}} := \bigcup_{k \in K} \overline{S_{i,k}},
\end{equation}
where the sets comprising the union have the form,
\begin{equation} \label{eq:local_piece}
    \overline{S_{i,k}} := \left\{ \begin{aligned} 
        & F_j(z) = 0, l_j \le z_j \le u_j, &j \in J_1 \\
        & F_j(z) \ge 0, l_j = z_j, &j \in J_{2} \\
        & F_j(z) \le 0, z_j = u_j, &j \in J_{3}
        \end{aligned} \right\},
\end{equation}
for some appropriately sized index sets $J_1$,$J_2$, $J_3$. Interpreting the set $\overline{S_i}$ via this union comes from enumerating the possible ways the conditions \cref{eq:comp_def} can be satisfied. Using this definition, the minimization in \cref{eq:stationary_graph_bilevel} becomes:
\begin{equation} \label{eq:simple_union}
    \begin{aligned}
        \min_{x} \ &f(x) \quad
        \mathrm{s.t.} \ \ & x \in \bigcup_{k\in K} D_k.
    \end{aligned}
\end{equation}
The following lemma allows us to reason about local optima of \cref{eq:simple_union}. 
\begin{lemma} \label{lem:1}
    \emph{Let the sets $D_k$ appearing in \cref{eq:simple_union} be closed sets. Then a point $x^*$ is a local optimum of \cref{eq:simple_union} if and only if $x^*$ is a local optimum of all problems:} 
    \begin{equation} \label{eq:single_piece}
        \min_{x \in D_k} f(x), \text{for all $k$ s.t. $x^* \in D_k$}.
    \end{equation} 
\end{lemma}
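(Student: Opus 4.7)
The plan is to prove the two implications separately. The forward direction is essentially tautological; the backward direction is where the closedness hypothesis plays an essential role, together with (implicit) finiteness of the index set $K$.

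For the forward direction, assume $x^*$ is a local optimum of \cref{eq:simple_union}: there exists $\epsilon>0$ such that $f(x^*)\le f(x)$ whenever $x\in\bigcup_{k\in K}D_k$ and $\|x-x^*\|<\epsilon$. Fix any $k$ with $x^*\in D_k$. Every $x\in D_k$ with $\|x-x^*\|<\epsilon$ lies in the union, so $f(x^*)\le f(x)$, giving local optimality of $x^*$ for \cref{eq:single_piece}. Closedness of the pieces is not used here.

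For the backward direction, partition the indices into $K^{\text{in}}:=\{k:x^*\in D_k\}$ and $K^{\text{out}}:=\{k:x^*\notin D_k\}$. By hypothesis, for each $k\in K^{\text{in}}$ there is $\epsilon_k>0$ such that $f(x^*)\le f(x)$ for all $x\in D_k$ with $\|x-x^*\|<\epsilon_k$. For each $k\in K^{\text{out}}$, closedness of $D_k$ combined with $x^*\notin D_k$ yields a radius $\delta_k>0$ with $B_{\delta_k}(x^*)\cap D_k=\emptyset$. Setting $\epsilon:=\min_{k\in K}\min(\epsilon_k,\delta_k)$, any $x\in\bigcup_k D_k$ with $\|x-x^*\|<\epsilon$ must lie in some $D_k$ with $k\in K^{\text{in}}$ (all $K^{\text{out}}$ pieces are excluded by construction), hence $f(x^*)\le f(x)$. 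This establishes local optimality over the union.

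The main technical subtlety is ensuring $\epsilon>0$ when taking the infimum of the $\delta_k$ over $K^{\text{out}}$. This is immediate if $K$ is finite, which is precisely the situation in this paper: the sets $D_k \equiv \overline{S_{-i,k}}$ enumerate the finitely many sign/activity patterns of the complementarity conditions in \cref{eq:comp_def}. For a general (uncountable) family of closed sets one would additionally need the pieces $D_k$ with $x^*\notin D_k$ to be locally uniformly bounded away from $x^*$, but no such extension is required here. I expect the bookkeeping of the two index subsets $K^{\text{in}}$ and $K^{\text{out}}$, and invoking closedness exclusively on the latter, to be the only nontrivial step.
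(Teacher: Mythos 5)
Your proof is correct and follows essentially the same route as the paper's: the forward direction is a trivial restriction, and the backward direction uses closedness of the pieces not containing $x^*$ to shrink the neighborhood so that only the pieces in $\gamma_D(x^*)$ (your $K^{\text{in}}$) intersect it. Your explicit remark that finiteness of $K$ is needed for $\min_k \delta_k > 0$ is a point the paper glosses over with ``for sufficiently small choice of $\epsilon$,'' so your write-up is if anything slightly more careful.
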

\begin{proof}
    Let $D := \bigcup_{k\in K} D_k$. Furthermore, let
    \begin{equation}
        \gamma_D(x) := \{k \in K : x \in D_k \}.
    \end{equation}
    By definition, $x^*$ is a local optimum of  \cref{eq:simple_union} if and only if $x^* \in D$. There exists some $\epsilon > 0$ such that,
    \begin{equation} \label{eq:s1}
        f(x^*) \le f(x) \ \forall x \in D : \| x - x^*\| < \epsilon.
    \end{equation}
    Since the sets $D_k$ are all closed, their complements are open, and therefore for sufficiently small choice of $\epsilon$, 
    \begin{equation}
        x \notin D_k \ \forall k \notin \gamma_D(x^*).
    \end{equation}
    Using such a choice of $\epsilon$, the definition in (\ref{eq:s1}) can be rewritten as 
     \begin{equation} \label{eq:s2}
        f(x^*) \le f(x) \ \forall x \in D_k : \| x - x^*\| < \epsilon, k \in \gamma_D(x^*).
    \end{equation}
    Since $x^* \in D_k \ \forall k \in \gamma_D(x^*)$, this is equivalent to $x^*$ being a local optimum for each of the problems (\ref{eq:single_piece}). 
\end{proof}

The intuition behind this lemma is simple: If there are no local regions to descend within the unioned feasible set, then no individual set in the union would offer a descent direction either. The converse is also true.
This result inspires yet another set related to the bilevel solution graph, now only using a single component $\overline{S_{-i,k}}$ of the larger set of stationary points: 
\begin{equation}
    \label{eq:stationary_graph_bilevel2}
    \overline{B_{i,k}} = \left\{ \begin{aligned} 
        x^*: \exists \lambda^*, (x^*,\lambda^*) \in &\argmin_{x,\lambda}  && f_i\left(x\right) \\
        & \ \ \ \ \text{s.t.} && x \in C_i \\
        & && \left(x,\lambda\right) \in \overline{S_{-i,k}}
        \end{aligned} \right\}
\end{equation}

Leveraging \cref{lem:1} and the sets \cref{eq:stationary_graph_bilevel2}, an algorithm to check if a candidate point $x^*$ is a bilevel equilibrium can be derived.  First, we identify a pair $(x, \lambda) \in S_{-i} \subset \overline{S_{-i}}$, by solving the follower's optimization problem. Then we enumerate all local regions \cref{eq:local_piece}, i.e. those such that $(x, \lambda) \in \overline{S_{-i,k}}$. For each of these regions and associated indices $k$, we check to see if $(x, \lambda)$ is an element of $\overline{B_{i,k}}$. If the point belongs to all sets, then it must be an element of $\overline{B_i}$. This procedure is developed into an algorithm for computing bilevel equilibrium points in \cref{alg:cap}.

Solving for bilevel optima is generally NP-hard due to the combinatorial nature of the complementarity conditions which define the follower's solution graph. Nevertheless, in the games studied in this work, bilevel optima can often be computed in fractions of a second using \cref{alg:cap}. This depends on the problem data and size, and for difficult instances, solve times can be much longer, or \cref{alg:cap} can fail. While efficiency and reliability is important, it is not a critical component of this work. Rather, we are investigating the advantage posed by utilizing a bilevel formulation, which might inspire the development of more efficient and reliable techniques.
\begin{algorithm}[b]
\caption{Computing Bilevel Equilibria (for Player $1$)}\label{alg:cap}
\begin{algorithmic}
\Require Initial solution guess $x^*$
\While{not solved}
    \State $(x^*_2, \lambda) \gets$ sol. to player 2's optimization for $x_1^*$; $K \gets$ index set s.t. $(x^*,\lambda) \in \overline{S_{2,k}} \ \forall k \in K$; all agree $\gets$ true
    \For{$k \in K$}
        \If{$(x^*,\lambda) \notin \overline{B_{i,k}}$}
            \State all agree $\gets$ false; $(x^*, \lambda) \gets$ solution in $\overline{B_{i,k}}$
        \EndIf
    \EndFor
    \If{all agree}
        \State Return $x^*$
    \EndIf
\EndWhile
\end{algorithmic}
\end{algorithm}

\section{Two-player Racing Game}
\label{sec:racing_model}

Racing is a speed competition in any physical domains where a human-controlled or autonomous vehicle can move. To stay competitive, players must operate their vehicles near the dynamic limits and often maneuver close to other vehicles. This section describes a simplified nonlinear model of a two-player racing with generic vehicles on the plane.

Our racing game features two simple craft on a constant-width, lane-free track with bends. %
We assume players control their tangential acceleration and angular velocity (rate of heading). Most importantly, players must avoid collisions and stay within track limits. Acceleration and collision constraints drive competitive interactions, because they depend on the opposing player's current position and actions, which the ego player cannot directly control. There is no cooperation between the players. Our racing game and approach is fully model-driven, and does not require data-driven prior knowledge, for example unlike other recent non-cooperative dynamic game solvers that require a reference policy \cite{lidard_blending_2024}.

Players select their control inputs for the next $n_T$ horizon time steps at each simulation step by solving a dynamic game with some assumed solution concept. The solution concept of the two players need not be consistent. We assume players are fully aware of the opponent's state, but not the solution concept of the opponent. We assume that players cannot switch solution concepts during play. 

\subsection{Vehicle Dynamics Model}
\label{subsec:dynamics}

The vehicle dynamics involve first-order differential equations describing craft motion, with players controlling their own tangential acceleration and angular velocity. We add a linear forcing term for velocity-dependent drag. Let $p_i\in\mathbb{R}^2$ refer to the $i$\textsuperscript{th} player's position in the Cartesian coordinate system. We use superscripts to refer to the indices of vectors. Let $y_i$ be the state and $u_i$ be the control input for the $i$\textsuperscript{th} player,
\begin{equation*}
\begin{aligned}
y_i &= 
\begin{bmatrix}
p_i^\text{lat}&
p_i^\text{long}&
v_i &
\theta_i \end{bmatrix}^\intercal \in \mathbb{R}^{4},\quad
u_i = \begin{bmatrix}
\tau \\
\omega\end{bmatrix}\in \mathbb{R}^2,
\end{aligned}
\end{equation*}
$v_i$ is the tangential velocity, $\theta_i$ is the heading, $\tau$ is tangential thrust (or braking) acceleration and $\omega$ is angular velocity.

We use backwards (implicit) Euler integration for discretization, then at $k$\textsuperscript{th} time step,
\begin{equation*}
\begin{aligned}
y_i[k+1] =
\begin{bmatrix}  
p_i^\text{lat}[k] + \Delta t\; v_i[k+1] \;\cos \theta_i[k+1] \\
p_i^\text{long}[k] + \Delta t \; v_i[k+1] \;\sin\theta_i[k+1] \\
v_i[k+1] \\
\theta_i[k+1]
\end{bmatrix}.
\end{aligned}
\end{equation*}
Future tangential velocity and heading is determined by unity coefficients and, $c_\text{drag}$ is the linear drag coefficient,
\begin{equation*}
\begin{aligned}  
v_i[k+1] = \tau[k] - c_\text{drag} v_i[k], \quad
\theta_i[k+1] = \omega[k].
\end{aligned}
\end{equation*}
The joint set of states of the game is ${y \in \mathbb{R}^8 := [y_1^\intercal, y_2^\intercal]^\intercal}$. 

The private decision variables of $i$\textsuperscript{th} player at $k$\textsuperscript{th} simulation step is the player's state and control variables for the next $n_T$ time steps into the future:
\begin{equation*}
\begin{aligned}
x_i[k] &= \begin{bmatrix}  
\begin{bmatrix} y_i[k+1]^\intercal, \cdots, y_i[k+n_T]^\intercal  \end{bmatrix}^\intercal \\
\begin{bmatrix} u_i[k+1]^\intercal, \cdots, u_i[k+n_T]^\intercal  \end{bmatrix}^\intercal
\end{bmatrix} \in \mathbb{R}^{6n_T}.
\end{aligned}
\end{equation*}
The total number of decision variables is $n=12n_T$. We omit the time step indexing in square brackets when it's clear.

\subsection{Objective Functions and Constraints}
\label{subsec:obj_func}

The players' objective functions, defined for player $i$ is,
\begin{equation*}
\begin{aligned}
f_i(x) = 
\sum_{t=k+1}^{k+n_T} \alpha_1^2\,  (d_i^2 - r_i^2)^2 +\alpha_2\, u_i^\intercal u_i+  \beta (v_{-i}^\text{long} - v_{i}^\text{long}),
\end{aligned}
\end{equation*}
and it consists of three components integrated over the next $n_T$ time steps: (i) A penalty for being away from the road center, (ii) a quadratic penalty for control effort, (iii) and a linear penalty for the relative advantage of the opponent's longitudinal velocity. The velocity penalty term is positive when the opponent's velocity is greater than the ego's velocity, and vice versa. 
We use shorthand $d_i$ for the distance measured from the center of curvature of the track (three center-line checkpoints defines it), and $v_{-i}^\text{long}$ is the longitudinal velocity, 
\begin{equation*}
\begin{gathered}
d_i = (p_i - C_i)^\intercal(p_i - C_i), \\
v_{-i}^\text{long} = v_{-i}\sin(\theta_{-i}),\quad
v_{i}^\text{long} = v_{i} \sin(\theta_i),
\end{gathered}
\end{equation*}
$C_i\in\mathbb{R}^2$ is the position of the center of curvature and $r_i>0$ is the radius of curvature, discussed in more detail in the next subsection. The only part of the cost that depends on the opponent's actions is the relative velocity component, therefore this is the only component that encourages competition.

Each player has $4 n_T$ equality constraints for dynamics, $n_T$ inequality constraints for collision $g_i^\text{col}(x_1, x_2)$, $2 n_T$ inequality constraints for the track limits, $3n_T$ inequality constraints on tangential velocity $v_i$ to limit reverse speeds, and to ensure $\theta_i\in[-\pi/2, \pi/2]$, and $4n_T$ inequalities to ensure the tangential acceleration and rate of heading (the control inputs) are constrained. Responsibility of avoiding collisions is determined by the relative longitudinal position of the players. Lastly, the tangential acceleration constraint is a function of the opponent's position, intended encourage competitive interactions as a proxy for drafting.

\subsection{Track Model and Constraints}

The track is defined as a list of lateral offsets specifying the center-line at different longitudinal positions. The track is approximated by the center $C_i$ and the radius $r_i$ of the circle that passes through the three nearest track center-line checkpoints for $i$\textsuperscript{th} player. A limitation of our track model is its inability to handle 180-degree turns, so we restrict heading to $[-\pi/2, \pi/2]$. Another drawback is that three consecutive checkpoints must not be collinear, which we avoid by adding small perturbations to the straight sections. We define the track constraints using circle equations and track width $w_\text{track}$,
\begin{equation*}
\begin{aligned}
(p_i - C_i)^\intercal(p_i - C_i) - \left(r_i - \frac{w_\text{track}}{2}\right)^2 &\geq 0, \\
\left(r_i + \frac{w_\text{track}}{2}\right)^2 - (p_i - C_i)^\intercal(p_i - C_i)  &\geq 0.
\end{aligned}
\end{equation*}

\subsection{Collision Avoidance and Responsibility} \label{sec:responsibility}

The collision avoidance constraint ensures a minimum radial distance between players. 
Collision avoidance depends on and benefits both players, however the responsibility is not shared.  %
Collision avoidance responsibility is determined by the players' relative longitudinal positions. The responsibility of collision avoidance falls to the player that is behind in terms of longitudinal position. The responsibility is modeled using a custom responsibility function with a transition region for when the players are approximately side by side,
\begin{equation*}
\begin{aligned}
g_{i, k}^\text{col}(x) &= \begin{matrix}\lVert p_1 - p_2\rVert^2 - r_\text{col}^2 - \ell_i(p_2^\text{long} -  p_1^\text{long})\end{matrix} \geq 0, \\
\ell_1(h) &= \frac{1}{1 + e^{a h + b} } - \frac{1}{1 + e^b}, \\
\ell_2(h) &= \frac{1}{1 + e^{-a h + b} } - \frac{1}{1 + e^b},
\end{aligned}
\end{equation*}
where $\lVert\cdot\rVert$ is the Euclidean norm, $r_\text{col}$ is the minimum desired radial distance between the players, and $a$ and $b$ adjust the slope of the sigmoid function near zero. 

We use positional constraints to set the lower bound of the collision constraint.
You can see how $l_1$ and $l_2$ depend on the relative positions in \cref{fig:ells} for $a=a=5$, $b=4.5$. Because collision constraints include both players' actions, the strategy-specific bounds $l_1$ and $l_2$ do not fundamentally change the constraints' nature for equilibrium computation and existence.
Collision constraint is enforced for every time step $k\in\{t+1, \cdots, t+n_T\}$, $g_{i}= \begin{bmatrix} g_{i, t}^\intercal, \cdots, g_{i, t+n_T}^\intercal  \end{bmatrix}^\intercal$.  When $p_2^\text{long} -  p_1^\text{long}\geq 0$ (P2 is ahead), then $\ell_1(h)\geq 0$, $\ell_2(h)\leq 0$ and so, $g_2^\text{col}$ is relieved, and vice versa. 
We add a small buffer to the minimum collision distance, but terminate simulations only if players violate absolute collision distance or other track violations.

\begin{figure}[t]
\centering
\includegraphics[width=0.8\columnwidth]{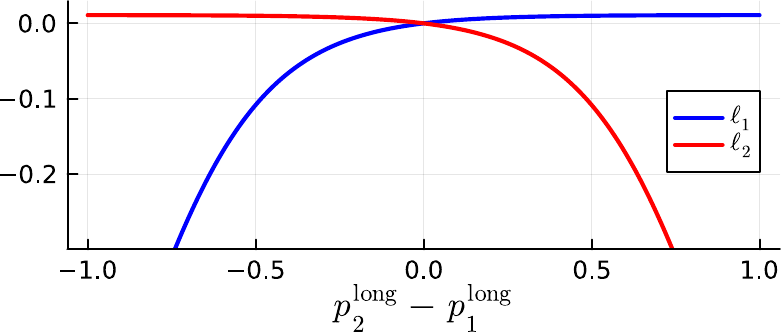}
\caption{$\ell_1$ and $\ell_2$ with respect to relative longitudinal position of the players, bounded above by a small positive value.} %
\label{fig:ells}
\end{figure}

\subsection{Drafting for Acceleration Limits}

\begin{figure}[b]
\centering
\includegraphics[width=0.8\columnwidth]{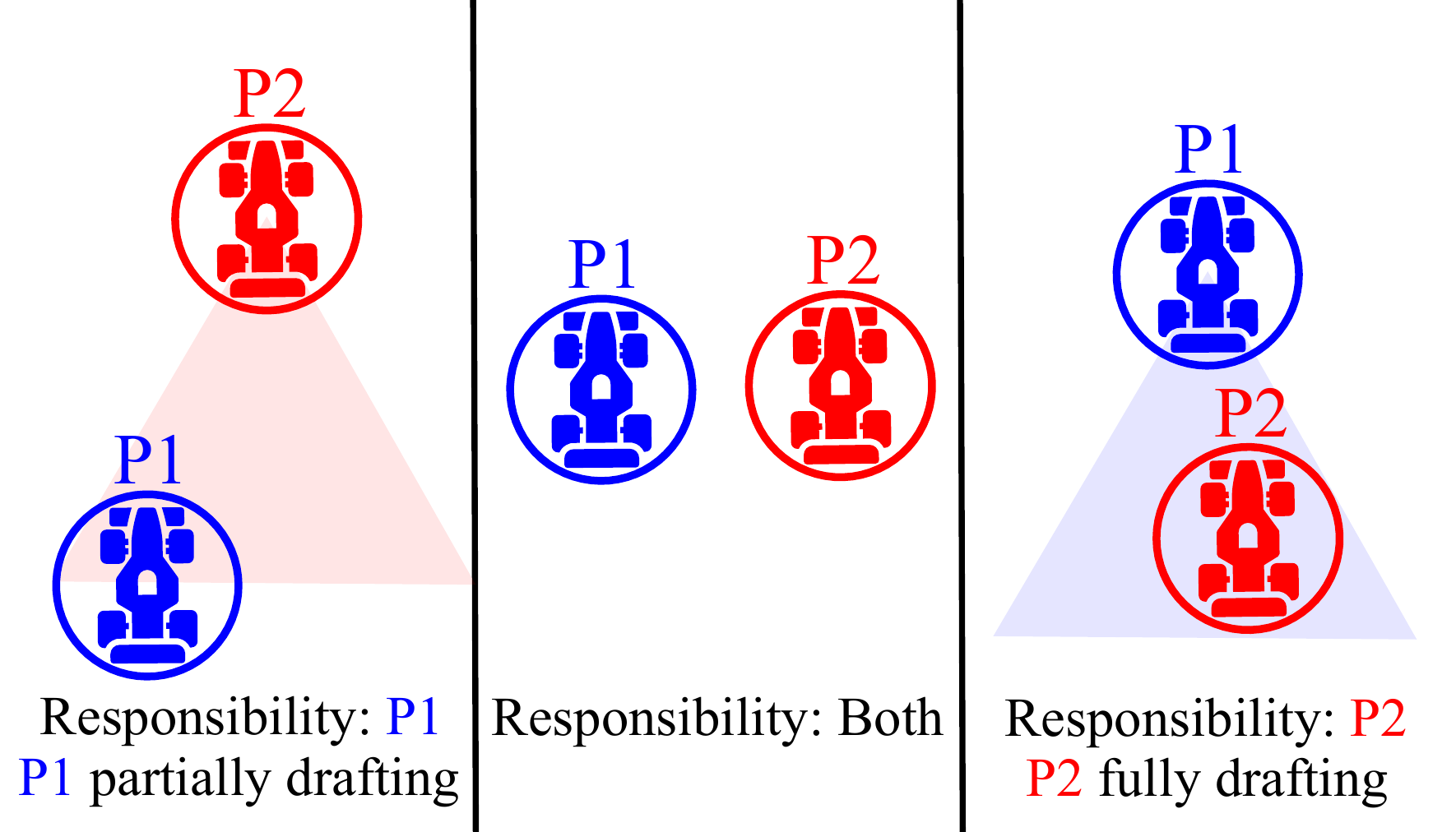}
\caption{Triangular drafting region and collision avoidance responsibility based on relative positions of players.}
\label{fig:responsibility}
\end{figure}

We use a linear drag as a function velocity, requiring constant control effort to maintain speed. To encourage dynamic competition, we apply positional constraints on players' acceleration limits. As an approximation of the aerodynamics of racing, we define a smooth triangular region behind each player to boost the opponent's acceleration limits, as shown in \cref{fig:responsibility}. The constraints on the control input can be summarized as $\tau_i \geq \tau_\text{min}$, $\omega_\text{max} \geq \omega_i$, $\omega_i \geq-\omega_\text{max}$, and $\tau^\text{draft}(p_{i}, p_{-i}) \geq \tau_i$,
where the upper limit of tangential acceleration is defined in $\tau_\text{max}^\text{draft} \geq \tau^\text{draft}(p_{i}, p_{-i}) \geq \tau_\text{max}^\text{nom}$, which takes on a greater value when the ego is in the smooth triangular area behind the opponent, as in \cref{fig:responsibility}. 
When the ego is outside the drafting zone, then $\tau^\text{draft}(p_{i}, p_{-i}) =  \tau_\text{max}^\text{nom}$. Parameters $\ell_\text{draft}, w_\text{draft}$ define the size of the drafting region. The drafting encourages close interactions, and larger (negative) braking accelerations enable defensive strategies.

\section{Simulations and Discussion}
\label{sec:sim_experiments}

We compare 16 different competition types as a combination of different information patterns using our two-player racing model in a large-scale randomized simulation trial. Players choose from the following strategies: (1) Single-player problem, (2) Nash equilibrium, (3) leader in bilevel equilibrium, (4) follower in bilevel equilibrium. Players do not know about their opponents' choice of strategy, so the players disagree on the information pattern in 12 of the 16 competition types. The single-player game involves optimizing the ego problem \cref{eq:solution_graph} when using a fixed trajectory for opponent, generated by predicting forward with constant velocity and heading from the initial position of the opponent, i.e. ``constant-velocity Model-Predictive Control''. Due to the symmetry in the information patterns, we only simulate 10 competitions for each configuration.

Even in the absence of disagreement, the trajectory games are imperfect models of the interactions. Both players continually solve finite-horizon equilibria but only execute the first step of the strategy before re-solving, unlike in finite-horizon \emph{feedback} games \cite{laine2023computation}. Players' lack of knowledge about each others' strategies could be modeled as imperfect information games, where a belief distribution over possible opponent strategies is used for optimization \cite{peters2024contingency}. However, repeated and imperfect information games are more complex and computationally demanding. Our approach is tractable while still yielding useful strategies, so we limit our exploration to it for this work.

Bilevel problems are challenging due to their sensitivity to solver initialization, which can lead to getting stuck in poor local equilibria or fail to find an equilibrium. We use a fallback strategy when we are solving for the bilevel equilibrium. We initialize the bilevel solver using the Nash equilibrium solution, in the event that the solver fails, we fall back to using the single-player solution as our initialization. 
If all else fails, we revert to an uncontrolled simulation step, hoping for improvement in subsequent iterations. At the start of every simulation step, we check all constraints. We terminate the simulation if either players' initial state violate track boundaries or collision constraints. 

We sampled 1000 random feasible initial conditions, and a random phase that corresponds to each initial condition to start from a random location on an infinitely repeating track pattern that contains straight sections and turns. We simulated for 10 competition types, for 25 simulation time steps, using $n_T=10$ number computation horizon time steps, for each initial condition. %
For the initial velocities, the ego player's tangential velocity is sampled from a uniform range between $[1.5, 3]\;$ m/s, while the opponent's tangential velocity is selected using another uniformly distributed offset between $[0,1.5]$ m/s with respect to the ego player's velocity, with zero initial heading (vertical).  For all experiments, we use the following parameters: $\alpha_1=0.001$, $\alpha_2=0.0001$,  $\beta=0.1$, $n_T=10$, $\Delta t=\qty{1e-1}{\second}$, $c_\text{drag}=\qty{.1}{\second^{-1}}$, $r_\text{col}=\qty{1.0}{\meter}$, $\tau_\text{max}^\text{nom}=\qty{1.0}{\meter\second^{-2}}$, $\tau_\text{max}^\text{draft}=\qty{3}{\meter\second^{-2}}$, $\tau_\text{min}=\qty{-3.0}{\meter\second^{-2}}$, $\omega_\text{max} = \qty{3}{\second^{-2}}$, $w_\text{track} = \qty{4.0}{\meter}$, $w_\text{draft}=\qty{5.0}{\meter}$, $\ell_\text{draft} = \qty{5.0}{\meter}$.

\subsection{Robustness}

Our robustness metric is the mean number of steps before (i) a player is forced out of track, or (ii) players collide. The mean number of steps for all competition types are presented in \cref{tab:steps}, with \%95 confidence interval (CI). Player 1 (P1) is the column player, while Player 2 (P2) is the row player. 

Assume P1 is the ego player and the opponent is P2, and P2 chooses their strategy randomly from an equal probability distribution, then the average of row represents expected robustness per ego ego strategy. We find playing as leader in bilevel equilibrium strategy leads to reduced robustness on average, as boldfaced in \cref{tab:steps}. 
The average robustness difference between SP, Nash and Follower strategies are statistically insignificant, which are all more robust than the leader strategy.
Playing as a leader increases aggression, which results in reduced safety. The leader-leader competition leads to the lowest robustness of any competition type, as boldfaced in \cref{tab:steps}.
We stipulate this is due to the players having a false sense of confidence about their ability to enforce their strategies on their opponent.

\begin{table}[b]
\centering
\caption{Robustness: Mean number of simulation time steps}%
	\begin{tabular}{
     r
     S[table-format=2.1(1)]
     S[table-format=2.1(1)]
     S[table-format=2.1(1)]
     S[table-format=2.1(1)]
     }
	\toprule
	\diagbox{P2}{P1} & {SP (S)} & {Nash (N)} & {Leader (L)} & {Follower (F)} \\
	\midrule
	SP (S) & 20.00(47) & 19.76(48) & 18.80(50) & 20.08(48)  \\
	Nash (N)& 19.76(48) & 19.63(48) & 18.12(51) & 19.72(48) \\
    Leader (L) & 18.80(50) & 18.12(51) & \bftabnum 16.76(50) & 19.05(49) \\
    Follower (F) &  20.08(48) & 19.72(48) & 19.05(49) & 20.10(48) \\
    \midrule
    Average & 19.66(44) & 19.30(44) &  \bftabnum 18.18(43) & 19.74(44) \\
 	\bottomrule
	\end{tabular}
\label{tab:steps}
\end{table}

\subsection{Performance}
\label{subsec:performance}

In this section, we compare  performance across competition types using the mean total running cost for the ego player (P1), which reflects what the players are actually optimizing for. We compile the performance of P1 for different competition types in \cref{tab:acosts}. Because P1 and P2 have the same dynamics and constraints, the table for P2 would be the transpose of \cref{tab:acosts}. The values in these tables are not symmetric along the diagonal, because the total cost is not zero-sum due to lane and control penalty components. 

\begin{table}[t]
\centering
\caption{Performance: Mean total running cost for P1 ($\times 100$) (Transpose for P2 cost)} 
	\begin{tabular}{
     r
     S[table-format=2.1(1)]
     S[table-format=2.1(1)]
     S[table-format=2.1(1)]
     S[table-format=2.1(1)]
     }
	\toprule
	\diagbox{P2}{P1} & {SP (S)} & {Nash (N)} & {Leader (L)} & {Follower (F)} \\
	\midrule
	SP (S) & 1.214(69) & 0.30(39) & 0.64(38) & 0.67(37)  \\
	Nash (N)& 2.09(37) & 1.148(67) & 0.30(42) & 0.22(38) \\
    Leader (L) & 1.23(38) & 1.41(42) & 0.625(50) &  0.13(41) \\
    Follower (F) & 1.75(35) & 2.05(37) & 1.80(48) &  \bftabnum 1.200(66) \\
    \midrule
    Average & 1.57(26) & 1.22(14) & .84(15) & \bftabnum .55(27) \\
 	\bottomrule
	\end{tabular}
\label{tab:acosts}
\end{table}

For a P2 opponent who chooses their strategy from a uniform strategy distribution, the average cost in the bottom-most row in \cref{tab:acosts} indicates expected performance for P1 strategies. Interestingly, we find that playing as the follower against an opponent who is the leader provides the lowest cost.
Another way to think about the average relative cost advantage is in terms of the meta  bimatrix game that emerges from \cref{tab:acosts}. This bimatrix game has a unique Nash equilibrium, and the reader can verify that the Follower-Follower competition (boldfaced in \cref{tab:acosts}) is the only Nash equilibrium, 
which is stable for all values within the confidence interval range. 

The bilevel competition leads to increased competitive advantage to the leader, except when the opponent adopts a follower strategy. 
The competitive advantage comes with increased collision risk and therefore reduced robustness, or safety, especially if both players greedily assume that they are the leader, the likelihood of a collision or track boundary violation increases significantly. 
Remarkably, the best choice in terms of performance and robustness is for players to unilaterally assume that they are the follower.

\section{Conclusion} 
\label{sec:conclusion}

In this study we performed a large scale empirical analysis on the benefits felt from computing trajectories via bilevel competition compared to static reasoning. We have shown that the information structure has a significant effect on the competitive performance and safety. We observed that bilevel reasoning can lead to competitive advantage in the physical racing domain, but it is potentially a trade-off with safety if the players do not agree on the competition type. 
When players act greedily and end up in Leader-Leader competition, safety decreases for all. While an always-follower strategy can improve safety, it can hurt competitive performance.

Our results could inform roboticists designing control systems in autonomous multi-agent settings. We show that even when the opponent's strategy is unknown, the choice of ego strategy have performance and safety consequences. If the opponent's strategy is known (for example, by observation over time), the trade-off between safety and performance could the guide selection between different solution concepts to satisfy design goals. An interesting extension to this work is allowing the players to change their strategy during the race, for example to guide an adaptive control scheme.

\bibliographystyle{IEEEtran}
\bibliography{references}

\end{document}


\title{Does bilevel optimization result in more competitive racing behavior?}

\author{Author Names Omitted for Anonymous Review. Paper-ID 389}

\maketitle

\IEEEpeerreviewmaketitle

\section{Supplementary Material}
\begin{proof}[Proof of \cref{lem:1}]
    Let $D := \bigcup_{k\in K} D_k$. Furthermore, let
    \begin{equation}
        \gamma_D(x) := \{k \in K : x \in D_k \}.
    \end{equation}
    By definition, a point $x^*$ is a local optimum for (\ref{eq:simple_union}) if and only if $x^* \in D$, and there exists some $\epsilon > 0$ such that
    \begin{equation} \label{eq:s1}
        f(x^*) \le f(x) \ \forall x \in D : \| x - x^*\| < \epsilon.
    \end{equation}
    Since the sets $D_k$ are all closed, their complements are open, and therefore for sufficiently small choice of $\epsilon$, 
    \begin{equation}
        x \notin D_k \ \forall k \notin \gamma_D(x^*).
    \end{equation}

    Using such a choice of $\epsilon$, the definition in (\ref{eq:s1}) can be rewritten as 
     \begin{equation} \label{eq:s2}
        f(x^*) \le f(x) \ \forall x \in D_k : \| x - x^*\| < \epsilon, k \in \gamma_D(x^*).
    \end{equation}
    Since $x^* \in D_k \ \forall k \in \gamma_D(x^*)$, this is equivalent to $x^*$ being a local optimum for each of the problems (\ref{eq:single_piece}). 
\end{proof}